\def\cA{{\ca A}}
\def\cE{{\ca E}}
\def\cF{{\ca F}}
\def\cS{{\ca S}}
\def\cK{{\ca K}}
\def\bC{{\mathbb C}}           
\def\bN{{\mathbb N}}
\def\bR{{\mathbb R}}
\def\bS{{\mathbb S}}
\def\beq{\begin{eqnarray}}
\def\eeq{\end{eqnarray}}
\newcommand{\ca}[1]{{\cal #1}}         
\def\alp{\alpha}
\def\de{\delta}
\def\ka{\kappa}
\def\la{\lambda}
\def\om{\omega}
\def\De{\Delta}
\newcommand{\id}{\mathbbm{1}} 
\def\supp{\textrm{supp}}
\newtheorem{theorem}{Theorem}[section]
\newtheorem{proposition}[theorem]{Proposition}
\numberwithin{equation}{section}
\def\vspp{\vspace{0.1cm}}
\def\sse #1 {\vsp\ifhmode{\par}\fi\refstepcounter{subsection}
  \noindent {\bf\thesubsection}. {\em #1}.\quad
  \addcontentsline{toc}{subsection}{\protect\numberline{\thesubsection} #1}%
  }
\def\ssb #1 {\vsp\ifhmode{\par}\fi\refstepcounter{subsection}
  \noindent {\bf\thesubsection.} {\bf #1.}\quad
  \addcontentsline{toc}{subsection}{\protect\numberline{\thesubsection} #1}%
  }
\def\ssa #1 {\ifhmode{\par}\fi\refstepcounter{subsection}
  \noindent {\bf\thesubsection.} {\bf #1.}\quad
  \addcontentsline{toc}{subsection}{\protect\numberline{\thesubsection} #1}%
  }
\def\remark #1 {\vsp\vspp\ifhmode{\par}\fi\noindent\noindent {\bf Remark.} {#1}\vsp\vspp\par}
\def\remarks #1 {\vsp\vspp\ifhmode{\par}\fi\noindent\noindent {\bf Remarks.} {#1}\vsp\vspp\par}
\begin{document}


\par
\bigskip
\LARGE
\noindent
{\bf Quantum Spacetime and the Universe at the Big Bang, vanishing interactions and fading degrees of freedom}
\bigskip
\par
\rm
\normalsize


\large
\noindent {\bf Sergio Doplicher$^{1}$}, {\bf Gerardo Morsella$^{2}$}
and {\bf Nicola Pinamonti$^{3}$}\\
\par
\small
\noindent $^1$
Dipartimento di Matematica, Universit\`a di Roma ``La Sapienza'', Piazzale Aldo Moro, 5,
I-00185 Roma, Italy, email dopliche@mat.uniroma1.it.\smallskip

\noindent $^2$
Dipartimento di Matematica, Universit\`a di Roma ``Tor Vergata'', Via della Ricerca Scientifica,
I-00133 Roma, Italy, email morsella@mat.uniroma2.it.\smallskip

\noindent $^3$
Dipartimento di Matematica, Universit\`a di Genova, Via Dodecaneso, 35, 
I-16146 Genova, Italy and 
INFN - Sez. di Genova, Via Dodecaneso, 33 I-16146 Genova, Italy,
email pinamont@dima.unige.it.\smallskip

\normalsize
\par
\medskip

\rm\normalsize
\noindent {\small November 12, 2019}

\rm\normalsize


\par
\bigskip

\noindent
\small
{\bf Abstract}.
As discussed in \cite{BDMP15} Physics suggests that, close to cosmological singularities, the effective Planck length diverges, hence a ``quantum point'' becomes infinitely extended. We argue that, as a consequence, at the origin of times spacetime might reduce effectively to a single point and interactions disappear. This last point is supported by converging evidences in two different approaches to interacting quantum fiedls on Quantum Spacetime: the field operators evaluated at a ``quantum point'' converge to zero, and  so do the lowest order expressions for interacting fields in the Yang Feldman approach, while, at all orders we find convergence to zero of the interacting field operators obtained adapting methods of perturbative Algebraic Quantum Field Theory to Quantum Spacetime, with a novel picture of the effective Lagrangian~\cite{DMP19}. This novel picture mantains the ultraviolet finiteness of the perturbation expansion but allows us to prove also the convergence in the adiabatic limit. It remains an open question whether the $S$ matrix itself converges to unity and whether the limit in which the effective Planck length diverges is a unique initial condition or an unreachable limit, and only different asymptotics matter.

\normalsize

\section{Introduction}
As explained in \cite{BDMP15} Physics suggests that, close to singularities, the effective Planck length diverges. As a consequence, the range of non locality would diverge as well, and this provides a possible solution to the horizon problem \cite{DMP13}. 
However, this means that ``at'' the singularity every point is in instantaneous contact with every other one, that is the Universe, even if infinite and flat at subsequent times, at $t = 0$ was dynamically equivalent to a single point, namely a system with not infinitely many but zero degrees of freedom.
This would solve in a radical way the problem of choices of the ``initial conditions''.

Can we give any mathematical basis to this heuristic argument? One way is the choice of describing interactions of quantum fields on Quantum Space Time (QST) \cite{DFR95} using the ``Quantum Wick Product'' (QWP) \cite{BDFP03}. This choice accounts for the fact that independent points in QST cannot merge to a single point, but at best can be brought near to one another evaluating ``optimally localized states'' on the difference variables - which are quantum variables obeying the same Commutation Relations as the coordinates themselves, and accordingly cannot be set equal to zero. 

Actually, in QST, each of the forms describing distance, area, three-volume, spacetime volume, has a lower bound for the sum of the squares of their components of order one in Planck units (in any reference frame) \cite{BDFP10}.
The use of optimally localized states evaluated on the difference of variables leads also to the formulation of a Quantum Diagonal Map, which composed with the ordinary Wick ordering in the products of field operators evaluated at distinct points in QST, produces the QWP.
Interactions of fields on QST expressed in terms of the QWP, and subject to an adiabatic time cutoff, are free of UV divergences \cite{BDFP03}. But it is also worth recalling that since optimal localization refers to a specific Lorentz frame, the use of the QWP breaks Lorentz covariance.

The scenario described at the beginning is confirmed in a model where matter is simulated by a single scalar massless quantum field, interacting with the gravitational field in a semiclassical picture, where the source of the Einstein Equations is taken to be the expectation value of the quantum energy-momentum tensor of the field - evaluated using the QWP - in a thermal equilibrium state of the field (i.e., a state fulfilling the KMS condition), and only spherically symmetric solutions are considered \cite{DMP13}.

But what happens when the Planck length is replaced by the effective Planck Length, and the latter tends to infinity?
An inspection of the formulas in Proposition 2.3 of \cite{BDFP03}, rewritten for generic values of $\lambda_P$, shows that the Quantum Diagonal Map itself tends to zero when  $\lambda_P$ tends to infinity.
Not surprisingly, since, that map keeps the factors far apart from each other at a distance which becomes infinite in that limit.
This means that, in the above picture of dynamics, the interactions between quantum fields vanish near the singularity: the theory is in some sense {\it asymptotically free} near the Big Bang. The Universe starts as a system with zero degrees of freedom.

More precisely, analyzing the Yang-Feldman equation with interaction term defined through the Quantum Diagonal Map, we show in Sec.~\ref{se:second} below that the interacting field vanishes at lowest order in the limit of large Planck length. Moreover, as discussed in Sec.~\ref{sec:limS}, the same holds, at all orders in perturbation theory, for the interacting observables defined using the adaptation of the perturbative approach to Algebraic Quantum Field Theory (pAQFT) of~\cite{BDF09} developed in~\cite{DMP19}.

The key step in~\cite{DMP19} is the use of a different effective interaction Lagrangian which is equivalent to the one obtained by means of the delocalization kernel discussed in  \cite{BDFP03} in the adiabatic limit, namely after integration of the effective lagrangian density over the full spacetime. This new interaction Lagrangian is obtained by replacing in the classical Lagrangian density the field operators at a point $x$ by fields at a quantum point described by states optimally localized around $x$. 
This makes the associated Feynman rules slightly different from those described by Piacitelli in \cite{piacitelli} c.f. also \cite{Bahns}. The so obtained $S-$matrix turns out to be unitary. The perturbation expansion is still ultraviolet finite, but an important new feature appears: the existence of the adiabatic limit of the vacuum expectation values of interacting observables can be proved in this setting.

The latter result is obtained,   
adapting the analysis of \cite{FredenhagenLindner}, by showing that the spatial adiabatic limit of interacting KMS states at finite temperature can be taken simultaneously with the zero temperature limit, thereby obtaining the ground state of the interacting theory. 
The first crucial observation in order to obtain such result is that, thanks to a remnant of causality in the theory, with a fixed spatial cutoff the temporal cutoff can be taken equal to one in the future of a given time slice. One then observes that, in the limit in which the temporal cutoff is one everywhere, the vacuum expectation values of the interacting observables can also be obtained as the zero temperature limit of the corresponding expectation values in an interacting KMS state. The final step is then to show that the latter limit and the limit in which the spatial adiabatic cutoff is removed can be taken together. The resulting state is therefore invariant under spacetime translations, and can then be interpreted as the vacuum of the interacting theory. This shows in particular that
in the obtained theory there is no ultraviolet-infrared mixing.

The conclusions on the limit when the effective Planck length diverges are better understood if we consider the generalization of the notion of ``field at a point'' in Quantum Spacetime. 
Points have to be replaced by suitable pure states on the Algebra of Quantum Spacetime, where the uncertainties in the four coordinates are simultaneously as small as possible.
These states are the ``optimal localized states'', mentioned above. Evaluating a field operator - as a function of the quantum coordinates, affiliated to the tensor product of the algebra of Quantum Spacetime and that of field operators - on such a state in the first factor, tensored with the identity map in the second, produces the desired generalization of the notion of ``field at a point'' in Quantum Spacetime \cite{DFR95}.

If in this expression we replace the Planck length by the effective Planck length, in the limit where the latter tends to infinity, the so constructed  ``field at a point'' becomes translation invariant and tends to a multiple of the identity, generically to zero.

This argument is in agreement with the picture that the interactions between Quantum Fields vanish near the singularity; but note that it does not rely on the adoption of the Quantum Wick Product.

However, the initial singularity might well be an unreachable limit: the possible different ways to approach that limit might well replace, with a manyfold of possibilities, the different choices of initial conditions, the existence of a unique limit being replaced by a chaos of several singular limiting points.

At the opposite extreme, it might turn out that the vanishing of interactions at the limit already produces the gradual vanishing of the differences between different states as the singularity is approached. And the last mentioned scenario might well be valid for special classes of interactions, which are asymptotically free near singularities.

These questions are closely related to the problem, which is quite open,  of the transition from the possibly single state of the system with zero degrees of freedom representing the universe at $t = 0$ to the states of the system with infinitely many degrees of freedom representing the flat universe at subsequent times.

Quantum Spacetime thus seems to solve the Horizon Problem and might well solve that of the initial conditions. Similarly, the extension of size equal to the effective Planck length of the quantum oscillations could well justify the emergence of seeds of the formation of galaxies in the early instants, when the effective Planck length was extremely large, just from quantum fluctuations of the vacuum.

So, many consequences of the inflationary hypothesis seem to be within reach of explanations based on the Quantum nature of Spacetime; the quantum structure in the small seems to be at the root of various features of the universe in the large. 

Current researches are concerned with the question whether the inflationary hypothesis itself, as well as a specific form of an ``effective inflationary potential'', can be deduced from QST.

\section{Quantum fields on QST in the $\la_P \to \infty$ limit}\label{se:second}
In the approach of~\cite{DFR95}, Quantum Spacetime is the C*-algebra $\cE$ describing the realizations of the relations (in generic units)
\[
e^{i k_\mu q^\mu} e^{i h_\mu q^\mu} = e^{-\frac {i\la_P^2} 2 k_\mu h_\nu Q^{\mu \nu}} e^{i (k_\mu + h_\mu) q^\mu}, \qquad h,k \in \bR^4,
\] 
with $q^\mu, Q^{\mu \nu}$, $\mu,\nu = 0,\dots,3$, selfadjoint operators such that $Q^{\mu \nu}$ equals (the closure of) the commutator $-i\la_P^{-2}[q^\mu, q^\nu]$, strongly commutes with $q^\la$, and satisfies
\begin{equation}\label{eq:qc}
Q^{\mu \nu} Q_{\mu \nu}  = 0, \qquad
\qquad \frac 1 {4} (Q_{\mu \nu}(*Q)^{\mu \nu})^2
= \id.
\end{equation}
These ``quantum conditions'' implement physically motivated spacetime uncertainty relations~\cite{DFR95}. It turns out that $\cE \simeq C_0(\Sigma, \cK)$, with $\cK$ the compact operators on the infinite dimensional separable Hilbert space and $\Sigma \simeq T\bS^2 \times \{-1,1\}$ the joint spectrum of the $Q^{\mu\nu}$'s in covariant representations, i.e., the manifold of real antisymmetric $4\times 4$ matrices satisfying~\eqref{eq:qc}. 

Optimally localized states on $\cE$ minimize the sums of the squares of the uncertainties of the $q_\mu$'s and are therefore the best approximations of points. As such, these states are parametrized by a localization center $x \in \bR^4$, and by a probability measure on $\Sigma$, which is automatically concentrated on the basis $\Sigma_1 \simeq \bS^2 \times \{-1,1\}$.  A state $\om_x$ optimally localized at $x$ is such that (its normal extension to the multiplier algebra of $\cE$ satisfies)
\begin{equation}\label{eq:optimal-localization}
\om_x(e^{i k_\mu q^\mu}) = e^{i k \cdot x}e^{-\frac{\la^2_P}{2} |k|^2}, \qquad k \in \bR^4,
\end{equation}
where $k \cdot x = k_\mu x^\mu$ is the usual Minkwoski scalar product, and $|k|^2 = \sum_{\mu=0}^3 k_\mu^2$ is the Euclidean norm square.

As mentioned in the Introduction, the quantum diagonal map introduced in~\cite{BDFP03} is a generalization to QST of the evaluation at coinciding points of a function of several points of $\bR^4$. It is therefore a map $E^{(n)} : \cE^{\otimes_Z^n} \to \cE_1$, with $Z=C_b(\Sigma)$ the center of the multiplier algebra of $\cE$, $\otimes_Z$ the $Z$-moduli tensor product and $\cE_1 \simeq C(\Sigma_1,\cK)$, the restriction of $\cE$ to $\Sigma_1$. On elements of $\cE^{\otimes_Z^n}$ of the form
\[
f(q_1,\dots,q_n) = \int_{\bR^{4n}} dk_1\dots dk_n\, \hat f(k_1,\dots,k_n) e^{ik_1 \cdot q_1}\dots e^{ik_n\cdot q_n},
\]
where $\hat f \in L^1(\bR^{4n})$ and $f(x_1,\dots,x_n) = \int_{\bR^{4n}} dk_1\dots dk_n\, \hat f(k_1,\dots,k_n) e^{ik_1 \cdot x_1}\dots e^{ik_n\cdot x_n}$ its (inverse) Fourier transform, the quantum diagonal map can be computed as
\[
E^{(n)}(f(q_1,\dots,q_n)) = \int_{\bR^{4n}} dk_1\dots dk_n\, \hat f(k_1,\dots,k_n) \hat r_n(k_1,\dots,k_n) e^{i(k_1+\dots+k_n) \cdot q},
\]
where
\[
\hat r_n(k_1,\dots,k_n) = e^{-\frac{\la_P^2}{2} \sum_{j=1}^n \left|k_j - \frac 1 n \sum_{i=1}^n k_i\right|^2}.
\]

The free scalar field can be defined on QST in analogy to Wigner-Weyl calculus:
\begin{equation}\label{eq:field}
\phi(q) := \int_{\bR^4} dk\, \hat \phi(k) \otimes e^{i k\cdot q}.
\end{equation}
The above formula has to be interpreted, more precisely, as an affine map from (a suitable *weakly dense subset of the) states on $\cE$ into the *-algebra $\cF$ of polynomials in the field operators on (commutative) Minkowski space time, given by
\[
\om \mapsto (\mathrm{id}\otimes \om)(\phi(q)) = \phi(f_\om) = \int_{\bR^4} dx\,\phi(x) f_\om(x),
\] 
where the right hand side is the operator on the bosonic Fock space obtained by smearing the ordinary free scalar field with the test function $f_\om(x) := \int_{\bR^4} dk\,\om(e^{ik\cdot q}) e^{-ik\cdot x}$, assuming that $k \mapsto \om(e^{i k\cdot q})$ belongs to the Schwarz space $\cS(\bR^4)$. 

The $n$-th quantum Wick power of $\phi$ is then obtained by composing the ordinary Wick product with the quantum diagonal map
\[\begin{split}
:\phi^n:_Q(q) &:= E^{(n)}(:\phi(q_1)\dots \phi(q_n):) \\
&= \int_{\bR^{4n}} dk_1\dots dk_n\,  \hat r_n(k_1,\dots,k_n):\hat \phi(k_1)\dots\hat\phi(k_n):\otimes \,e^{i(k_1+\dots+k_n) \cdot q}.
\end{split}
\]

The use of the ordinary Wick product that, when the Quantum Diagonal Map  is adopted on Quantum Spacetime, subtracts finite terms, is dictated by the requirement that the ordinary theory ought to emerge in the limit where the Planck length is neglected.

The above framework describes quantum fields on a quantum version of Minkowski spacetime. In order to discuss in a quantitative way the influence of the quantum nature of spacetime on the early cosmological evolution, it seems desirable to generalize this approach to curved backgrounds. Although some progress in this direction has been made~\cite{TV14, MT}, we still lack a complete theory of quantum fields on a curved quantum spacetime.
  
 However, some indications on the possible modifications to standard cosmology induced by the energy density of quantum fields propagating on QST have been obtained through a semiclassical approach in~\cite{DMP13}. 
 
 More precisely, a generalization of the analysis in~\cite{DFR95} of the operational limitations of localizability of events shows that also on a spherically symmetric, curved spacetime there is a minimal spherical localization distance of the order of the Planck length. This implies in particular that on a flat Friedmann-Robertson-Walker spacetime with metric
\[
ds^2 = -dt^2 +a(t)^2 \sum_{j=1}^3 dx_j^2
\] 
the \emph{effective} Planck length, i.e. the minimal localization distance measured in comoving coordinates, behaves like $1/a(t)$ for small scale factor $a(t)$ (and therefore diverges at the Big Bang, thus confirming the previous heuristic discussion of~\cite{Dop01}), or, equivalently, is independent of time if measured in terms of proper lengths. 

Combining this observation with the fact that flat FRW is conformally equivalent to Min\-kow\-ski spacetime, one is led to conjecture that the expectation value in a conformal KMS state of the energy density of a free massless conformally coupled scalar field on a quantum version of FRW spacetime can be obtained from the corresponding expectation value in a KMS state on quantum Minkowski spacetime by the same change $\beta \to \beta a(t)$ which works on the corresponding classical spacetimes. Thus, if one defines the energy density through the quantum Wick product, one obtains the \emph{ansatz}
\[
\rho_\beta(t) = \frac 1 {2\pi} \int_0^{+\infty} dk \,k^3 \frac{e^{-\la_P^2 k^2}}{e^{\beta a(t)k}-1}
\]
for the expectation value of the energy density on quantum FRW. In particular this expression has the asymptotic behavior $\rho_\beta(t) \sim a(t)^{-1}$ for small $a(t)$ and this entails that, when $\rho_\beta$ is coupled to the metric through the Friedmann equation, the initial singularity of the resulting cosmological evolution is removed to lightlike past infinity, thus avoiding the horizon problem of standard cosmology. It is worth stressing that this result is only a consequence of the appearance of a new length scale dictated by the quantum nature of spacetime, not of the choice of a particular dynamics of the quantum fields considered as, e.g., in inflationary models.
  
In order to discuss a more realistic situation, it is of course necessary to define an interacting field on QST. To this end, one possibility  is to employ the Yang-Feldman approach, consisting in solving iteratively the field equation, where the interaction term is defined via the quantum Wick product (unlike~\cite{BDFP05}, where the ordinary product in $\cE$ is used instead). To be more specific, we consider the equation
\begin{equation}\label{eq:YF}
(-\Box+m^2) \phi(q) = -g E^{(3)}\big(\phi(q_1)\phi(q_2)\phi(q_3)\big),
\end{equation}
with $\partial_\mu \phi(q) := \frac{\partial}{\partial a^\mu} \phi(q+a \id)|_{a=0}$, and look for a (formal) power series solution
\[
\phi(q) = \sum_{k=0}^{+\infty} g^k \phi_k(q).
\]
Inserting this series into~\eqref{eq:YF} and equating the coefficients of the corresponding powers of $g$, one finds that, obviously, $\phi_0(q)$ is the free scalar field~\eqref{eq:field}, while $\phi_1(q)$ would have to satisfy $(\Box +m^2)\phi_1(q) = - E^{(3)}(\phi_0(q_1)\phi_0(q_2)\phi_0(q_3))$. This has the problem that the right hand side contains the expression $\hat\phi_0(k_1)\hat \phi_0(k_2) \hat\phi_0(k_3)$, which is not even a quadratic form on Fock space. In order to make it well-defined, it is then natural to renormalize it as $:\phi_0^3:_Q(q)$. Therefore, identifying $\phi_0$ with the incoming field, 
\[
\phi_1(q) = -\int_{\bR^4} dx\,\De_R(x) :\phi_0^3:_Q(q+x),
\]
with $\De_R \in \cS(\bR^4)'$ the retarded propagator for the Klein-Gordon equation.
 
As anticipated in the Introduction, the interacting field on QST vanishes in the limit $\la_P \to \infty$, at least at first order in perturbation theory.

\begin{proposition}
For each Fock space vector $\Psi$ such that $k \mapsto \langle \Psi,\hat \phi_0(k)\Psi\rangle$, $(k_1,k_2,k_3) \mapsto \langle \Psi, : \hat\phi_0(k_1)\hat \phi_0(k_2) \hat\phi_0(k_3):\Psi\rangle$ are Schwartz functions, the limit 
\[
\lim_{\la_P \to +\infty} \langle \Psi,(\mathrm{id}\otimes\om_x)(\phi(q)) \Psi\rangle
\]
vanishes, at least at the first order in the perturbative expansion.

\end{proposition}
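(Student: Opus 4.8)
The plan is to evaluate $\langle\Psi,(\mathrm{id}\otimes\om_x)(\phi(q))\Psi\rangle$ order by order in the formal series $\phi(q)=\sum_k g^k\phi_k(q)$ and to check that the contributions of $\phi_0$ and $\phi_1$ — the orders written down above — both vanish as $\la_P\to\infty$. The $\phi_0$ contribution is immediate: by~\eqref{eq:optimal-localization} one has $(\mathrm{id}\otimes\om_x)(\phi_0(q))=\int_{\bR^4}dk\,\hat\phi_0(k)\,\om_x(e^{ik\cdot q})=\int_{\bR^4}dk\,\hat\phi_0(k)\,e^{ik\cdot x}e^{-\frac{\la_P^2}{2}|k|^2}$, whence
\[
\langle\Psi,(\mathrm{id}\otimes\om_x)(\phi_0(q))\Psi\rangle=\int_{\bR^4}dk\,g(k)\,e^{ik\cdot x}e^{-\frac{\la_P^2}{2}|k|^2},\qquad g(k):=\langle\Psi,\hat\phi_0(k)\Psi\rangle\in\cS(\bR^4);
\]
the integrand is bounded by $|g|\in L^1(\bR^4)$ and tends to $0$ for $k\neq0$, so the integral tends to $0$ by dominated convergence.

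For $\phi_1$ the plan is to insert the explicit form of $:\phi_0^3:_Q$ into $\phi_1(q)=-\int da\,\De_R(a):\phi_0^3:_Q(q+a\,\id)$, use $e^{iK\cdot(q+a\,\id)}=e^{iK\cdot a}e^{iK\cdot q}$ and $\om_x(e^{iK\cdot q})=e^{iK\cdot x}e^{-\frac{\la_P^2}{2}|K|^2}$ with $K:=k_1+k_2+k_3$, and take the matrix element. This gives
\[
\langle\Psi,(\mathrm{id}\otimes\om_x)(\phi_1(q))\Psi\rangle=-\int_{\bR^4}da\,\De_R(a)\,G_{\la_P}(a+x),
\]
where, with $k=(k_1,k_2,k_3)$, $F(k):=\langle\Psi,:\hat\phi_0(k_1)\hat\phi_0(k_2)\hat\phi_0(k_3):\Psi\rangle\in\cS(\bR^{12})$ by hypothesis and $\Phi(k):=\sum_{j=1}^3\big|k_j-\tfrac13\sum_i k_i\big|^2+|k_1+k_2+k_3|^2$, one sets $G_{\la_P}(z):=\int_{\bR^{12}}dk_1\,dk_2\,dk_3\;e^{-\frac{\la_P^2}{2}\Phi(k)}F(k)\,e^{iK\cdot z}$; the claim then reduces to $\int_{\bR^4}da\,\De_R(a)\,G_{\la_P}(a+x)\to0$. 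Note that $e^{-\frac{\la_P^2}{2}\Phi}$ is precisely the product $\hat r_3(k_1,k_2,k_3)\,e^{-\frac{\la_P^2}{2}|K|^2}$ of the kernel of the quantum diagonal map with the Gaussian supplied by the optimal localization. The structural observation is that $\Phi$ is a \emph{positive definite} quadratic form on $\bR^{12}$ — its first summand vanishes only for $k_1=k_2=k_3$, its second only for $k_1+k_2+k_3=0$ — so $\Phi(k)\ge c\,|k|^2$ for some $c>0$; in particular $e^{-\frac{\la_P^2}{2}\Phi}F\in\cS(\bR^{12})$, $G_{\la_P}\in\cS(\bR^4)$, and the pairing with $\De_R\in\cS(\bR^4)'$ is well defined.

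One cannot move the limit inside this distributional pairing: $\De_R$ is not a function, and $G_{\la_P}$ does \emph{not} converge to $0$ in the full topology of $\cS(\bR^4)$ (its highest-degree seminorms do not tend to zero). The estimate to be proved instead is, for all multi-indices $\alpha,\beta$,
\[
\sup_{z\in\bR^4}\big|z^\alpha\partial_z^\beta G_{\la_P}(z)\big|=O\big(\la_P^{|\alpha|-12}\big)\qquad(\la_P\to\infty).
\]
The way to get it: absorb the polynomial $(iK)^\beta$ produced by $\partial_z^\beta$ into a Schwartz function $F_\beta$ whose seminorms do not depend on $\la_P$; rewrite the multiplication by $z^\alpha$ as differentiation in $k_1$ of $e^{iK\cdot z}$ and integrate by parts in $k_1$; then use $|\partial_{k_1}^\gamma e^{-\frac{\la_P^2}{2}\Phi}|\le C_\gamma\,\la_P^{|\gamma|}e^{-\frac{c}{4}\la_P^2|k|^2}$ (a routine consequence of $\Phi$ being a positive definite quadratic form) together with $\int_{\bR^{12}}e^{-\frac{c}{4}\la_P^2|k|^2}dk=O(\la_P^{-12})$. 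Finally, $\De_R$ is a tempered distribution of finite order $N_0$ — and for the retarded Klein--Gordon propagator in four dimensions $N_0$ is small, in any case $N_0<12$ — so the integral $\int_{\bR^4}da\,\De_R(a)\,G_{\la_P}(a+x)$ is dominated in absolute value by a finite combination of the seminorms above with $|\alpha|\le N_0$, hence is $O(\la_P^{N_0-12})\to0$, proving the first-order statement.

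The step I expect to be the main obstacle is this last one. The pairing with $\De_R$ is genuinely distributional and, as noted, $G_{\la_P}$ does not converge to zero in $\cS(\bR^4)$; one is therefore forced to exploit the finite order of $\De_R$ and to make the decay of the finitely many relevant Schwartz seminorms of $G_{\la_P}$ quantitative through the power counting above. The heart of the matter is the positive definiteness of $\Phi$: it is the combination of the quantum diagonal map — keeping the three arguments at mutual distance of order $\la_P$ — with the extra factor $e^{-\frac{\la_P^2}{2}|k_1+k_2+k_3|^2}$ from the optimal localization of the remaining variable that makes the Gaussian damp \emph{every} momentum direction, and this is exactly the mechanism behind the vanishing of the interacting field in the limit.
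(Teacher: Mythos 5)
Your argument is correct and rests on the same mechanism as the paper's proof: the product of the quantum-diagonal kernel $\hat r_3$ with the Gaussian $e^{-\frac{\la_P^2}{2}|k_1+k_2+k_3|^2}$ supplied by the optimal localization is a nondegenerate Gaussian on all of $\bR^{12}$ (indeed $\Phi(k)=\sum_j|k_j|^2+\tfrac23|k_1+k_2+k_3|^2$ in Euclidean norms), and Gaussian power counting in $\la_P$ then kills the first-order term; the $\phi_0$ term is handled identically in both proofs. The difference is where the pairing with the propagator is performed: the paper Fourier-transforms and pairs $\hat\De_R$ with $f_{\la_P}=\hat G_{\la_P}$, asserting that $f_{\la_P}\to 0$ in $\cS(\bR^4)$, whereas you stay in position space, pair $\De_R$ with $G_{\la_P}$, note that your estimates only give $\sup_z|z^\alpha\partial_z^\beta G_{\la_P}(z)|=O(\la_P^{|\alpha|-12})$, and therefore invoke the finite order of $\De_R$ to conclude. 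This extra care is not pedantry: in the paper's displayed estimate the constant $C_m$ in front of $\sup_k(1+|k|^2)^m|\partial^\alpha f_{\la_P}(k)|$ silently absorbs factors of order $\la_P^{|\alpha|}$ produced when $\partial^\alpha$ hits the prefactor $e^{-\frac{\la_P^2}{2}|k|^2}$, so that bound really yields decay only of the seminorms with $|\alpha|$ small --- which suffices because $\hat\De_R$ (respectively $\De_R$) has low order, but the paper does not make this point. Your version makes the required power counting and the role of the finite order of the propagator explicit; apart from this, the two proofs are Fourier transforms of one another and buy the same conclusion.
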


\begin{proof}
By the above formulas, one has $\langle \Psi,(\mathrm{id}\otimes\om_x)(\phi_1(q)) \Psi\rangle = -(2\pi)^4\int_{\bR^4} dk\,\hat\De_R(k) e^{-i k \cdot x} f_{\la_P}(k)$, where
\[\begin{split}
f_{\la_P}(k) &:= e^{-\frac{\la^2_P}{2} |k|^2}
 \int_{\bR^{12}} dk_1dk_2 dk_3\, \de\left(k - \sum_{j=1}^3 k_j\right)\langle \Psi, : \hat\phi_0(k_1)\hat \phi_0(k_2) \hat\phi_0(k_3):\Psi\rangle \hat r_3(k_1,k_2,k_3)\\
& = e^{-\frac{\la^2_P}{2} |k|^2} \int_{\bR^8} d\ka_1 d\ka_2\, \psi\Big(\ka_1+\frac k 3,\ka_2+\frac k 3,\frac k 3-\ka_1-\ka_2\Big) e^{- \frac{\la_P^2}{2}(|\ka_1|^2+|\ka_2|^2+|\ka_1+\ka_2|^2)},
\end{split}\]
having defined $\psi(k_1,k_2,k_3) := \langle \Psi, : \hat\phi_0(k_1)\hat \phi_0(k_2) \hat\phi_0(k_3):\Psi\rangle$. Therefore, in order to show that $\langle \Psi,(\mathrm{id}\otimes\om_x)(\phi_1(q)) \Psi\rangle \to 0$ as $\la_P \to +\infty$, it is sufficient to verify that $f_{\la_P} \to 0$ in $\cS(\bR^4)$. To this end, we observe that the fact that $\psi \in \cS(\bR^{12})$ entails that for all $m \in \bN$ there exists $C_m > 0$ such that
\[
\sup_{k \in \bR^4} (1+|k|^2)^m |\partial^\alp f_{\la_P}(k)| \leq C_m \int_{\bR^8} d\ka_1 d\ka_2 \sup_{k \in \bR^4} \frac{(1+|k|^2)^m}{(1+|\ka_1+k/3|^2)^{m}} e^{- \frac{\la_P^2}{2}(|\ka_1|^2+|\ka_2|^2+|\ka_1+\ka_2|^2)},
\]
and since, clearly,
\[
\sup_{k \in \bR^4} \frac{(1+|k|^2)^m}{(1+|\ka_1+k/3|^2)^{m}} = \sup_{k \in \bR^4} \frac{(1+9|k-\ka_1|^2)^m}{(1+|k|^2)^{m}} \leq C(1+|\ka_1|^2)^m
\]
for a suitable $C > 0$, we obtain
\[
\sup_{k \in \bR^4} (1+|k|^2)^m |\partial^\alp f_{\la_P}(k)| \leq C_m C\int_{\bR^8} d\ka_1 d\ka_2\, (1+|\ka_1|^2)^m e^{- \frac{\la_P^2}{2}(|\ka_1|^2+|\ka_2|^2+|\ka_1+\ka_2|^2)}.
\]
The vanishing of the right hand side of the latter formula for $\la_P \to \infty$ is then an immediate consequence of the dominated convergence theorem.

Moreover, again the dominated convergence theorem guarantees that
\[
\langle \Psi,(\mathrm{id}\otimes\om_x)(\phi_0(q)) \Psi\rangle  = \int_{\bR^4}dk \langle \Psi, \hat\phi_0(k)\Psi\rangle e^{i k\cdot x} e^{-\frac{\la_P^2}{2}|k|^2}
\]
vanishes too as $\la_P \to +\infty$, and we get the desired statement.
\end{proof}

We note that the above argument can also be employed to show that  $\langle \Psi,(\mathrm{id}\otimes\om)(\phi_1(q)) \Psi\rangle \to 0$ for an arbitrary state $\om$ on $\cE$.

\section{Limits for large $\lambda_P$ at all orders in perturbation theory}\label{sec:limS}
In this Section we shall discuss the form of the correlation functions of the theory in the limit for large $\lambda_P$. 
Following \cite{DFR95} and the discussion presented in section \ref{se:second}, this limit correspond to an early time limit in a cosmological scenario. A convenient framework in which this limit can be controlled at all orders in perturbation theory is the one deveolped in~\cite{DMP19} adapting to QFT on QST the framework of pAQFT. To make the paper reasonably self-contained, we summarize below the main definitions and results of~\cite{DMP19}, to which we refer the interested reader for more details and proofs. To simplify the notation, from now on we denote the Planck length just by $\lambda$.

The starting observation is that the spacetime integral of the effective interaction Lagrangian of~\cite{BDFP03} coincides with the adiabatic (i.e., $g \to 1$) limit of 
\begin{equation}\label{eq:eff-adiab}
{V}_{\text{eff}} (\phi)= 
\left( \frac{\sqrt{2\pi}}{\lambda}\right)^{4n}
\int_{\bR^4}  dx  \; g(x) \prod_j \left(\int_{\bR^4}  dy_j \; e^{-\frac{ \langle y_j-x\rangle^2}{2 \lambda^2 }}
\phi(y_j)\right),
\end{equation}
which is what is obtained by replacing in the classical Lagrangian density fields at a point $x$ with fields at a quantum point described by the state optimally localized at $x$. As a consequence, the algebra of observables in the effective theory with interaction~\eqref{eq:eff-adiab} can be mapped with an appropriate bijection onto the algebra generated by the operators $(\mathrm{id}\otimes \omega_x)(\phi(q))$ with the local interaction
\begin{equation}\label{eq:V}
V_g(\phi) = \int_{\bR^4} g(x) \phi^n(x)   dx.
\end{equation}
To describe such algebra more explicitly, consider the set $\cA$ of functionals $A: C^\infty(\bR^4;\bR) \cap \cS'(\bR^4) \to \bC$ which are finte sums of functionals of the form
\begin{equation}\label{eq:A}
A(\phi) = \int_{\bR^{4\ell}} dx_1\dots dx_\ell \,f(x_1,\dots,x_\ell) \phi(x_1)^{n_1}\dots \phi(x_\ell)^{n_\ell},
\end{equation}
where $f : \bR^{4\ell} \to \bC$ is a continuous and compactly supported function. Such a functional is said to be localized in an open set $O \subset \bR^4$ if  $\supp f \subset O^\ell$. We denote by $\cA(O)$ the set of all such functionals. Define now the product of functionals $A, B \in \cA$ by
\[
A\star_\la B =  \mathcal{M} \circ e^{ \Gamma_{\Delta_{+,\la}} }  (A\otimes B)
\]
where $\mathcal{M}(A\otimes B)(\phi) = A(\phi)B(\phi)$ is the pointwise multiplication
and 
\[
\Gamma_{\Delta_{+,\la}} = \int_{\bR^8} dxdy \,\Delta_{+,\la}(x-y)\frac{\delta}{\delta \phi(x)}\otimes \frac{\delta}{\delta \phi(y)},
\]
with $\Delta_{+,\la}$ the two-point function of the field at a quantum point on QST:
\begin{equation}\label{eq:delta+lambda}
\Delta_{+,\la}(x) := \langle \Omega, (\mathrm{id}\otimes \omega_x)(\phi(q)) (\mathrm{id}\otimes \omega_0)(\phi(q)) \Omega\rangle = \int_{\bR^8} dydz\, G_\la(x-y)\Delta_+(y-z) G_\la(z), 
\end{equation}
and
\[
G_\la(x) := \frac {e^{-\frac{|x|^2}{2\la^2}}}{(\sqrt{2\pi}\la)^4}.
\]
This entails in particular that $\Delta_{+,\lambda}$ is a smooth and bounded function, and therefore $\cA$ becomes a $*$-algebra, the \emph{algebra of free observables}, with the product $\star_\la$ and the involution $A^*(\phi) := \overline{A(\phi)}$ (complex conjugation). The time-ordered  product associated to the product $\star_\la$ is then given by
 \[
A\cdot_{T_\la} B =  \mathcal{M} \circ e^{ \Gamma_{\Delta_{F,\la}} }  (A\otimes B),
\]
with the modified Feynmann propagator
\begin{equation}\label{eq:feynman}
{\Delta}_{F,\lambda}(x) := \theta(x^0){\Delta}_{+,\lambda}(x)+\theta(-x^0){\Delta}_{+,\lambda}(-x), 
\end{equation}
which is a continuous and bounded function, with Fourier transform
\begin{equation}\label{eq:modFeyn}
\hat{{\Delta}}_{F,\lambda}(p) =
\frac{-i}{(2\pi)^4}
 \frac{1}{ p^2 +m^2 -i\epsilon}
 e^{-\lambda^2 (2|\mathbf{p}|^2+m^2)}.
\end{equation}

In the following proposition we analyze the behavior of the above defined products for large Planck length.

\begin{proposition}\label{pr:factor-limit}
 For every $A,B\in\mathcal{A}$ the following limits hold
\[
\lim_{\lambda\to\infty}A\star_\lambda B = A\cdot B,  
\qquad
\lim_{\lambda\to\infty}A \cdot_{T_\lambda} B = A \cdot B,
\]
where $\cdot$ denotes the pointwise product.
\end{proposition}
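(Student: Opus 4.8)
Both products in the statement are of the form $\mathcal{M}\circ e^{\Gamma_W}$, with $W=\Delta_{+,\lambda}$ for $\star_\lambda$ and $W=\Delta_{F,\lambda}$ for $\cdot_{T_\lambda}$, whereas $A\cdot B=\mathcal{M}(A\otimes B)$ is the $W$-independent part. So the plan is to show that on the functionals of $\mathcal{A}$ the series $e^{\Gamma_W}$ truncates to a finite sum, that its order-zero term is already $A\cdot B$, and that the remaining terms are controlled by positive powers of $\|W\|_\infty$; one is then reduced to proving $\|\Delta_{+,\lambda}\|_\infty\to 0$ and $\|\Delta_{F,\lambda}\|_\infty\to 0$ as $\lambda\to\infty$. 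Since both products are bilinear it suffices to take $A,B$ of the form \eqref{eq:A}; such an $A$ is a polynomial in $\phi$ of degree $d_A:=n_1+\dots+n_\ell$, hence $\delta^k A/\delta\phi(z_1)\cdots\delta\phi(z_k)=0$ for $k>d_A$, so $\Gamma_W^{\,k}(A\otimes B)=0$ for $k>N:=\min(d_A,d_B)$ and $A\star_\lambda B=\sum_{k=0}^N\tfrac1{k!}\,\mathcal{M}\circ\Gamma_{\Delta_{+,\lambda}}^{\,k}(A\otimes B)$, the $k=0$ term being exactly $A\cdot B$.

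\textbf{The crux: decay of the propagators.} The one genuinely computational point is the uniform estimate $\|\Delta_{+,\lambda}\|_\infty\to0$. By \eqref{eq:delta+lambda} and \eqref{eq:modFeyn} the Fourier transform of $\Delta_{+,\lambda}$ is, up to a numerical constant, $\theta(p^0)\,\delta(p^2+m^2)\,e^{-\lambda^2(2|\mathbf{p}|^2+m^2)}$, so that, uniformly in $x$,
\[
|\Delta_{+,\lambda}(x)|\ \le\ \frac{1}{(2\pi)^3}\int\frac{d^3\mathbf{p}}{2\omega_{\mathbf{p}}}\,e^{-\lambda^2(2|\mathbf{p}|^2+m^2)}\ =\ \frac{e^{-\lambda^2 m^2}}{(2\pi)^3}\int\frac{d^3\mathbf{p}}{2\omega_{\mathbf{p}}}\,e^{-2\lambda^2|\mathbf{p}|^2}.
\]
For $\lambda\ge 1$ the last integrand is dominated by its value at $\lambda=1$, which is integrable because $1/\omega_{\mathbf p}$ is locally integrable in three dimensions and the Gaussian controls infinity; dominated convergence then gives that the right-hand side tends to $0$ (indeed it is $O(\lambda^{-2})$, and exponentially small if $m>0$). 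Since \eqref{eq:feynman} yields $|\Delta_{F,\lambda}(x)|\le\|\Delta_{+,\lambda}\|_\infty$ for every $x$, the modified Feynman propagator decays in the same way. This step is where I expect the real content to sit; the rest is bookkeeping.

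\textbf{Conclusion.} It remains to check that each term with $1\le k\le N$ above vanishes as $\lambda\to\infty$. Carrying out the $k$-th functional derivatives in \eqref{eq:A} produces a finite sum of functionals again of the form \eqref{eq:A}, whose continuous, compactly supported kernel is a $k$-fold product of translates of $\Delta_{+,\lambda}$ — coming from the $k$ factors of $\Delta_{+,\lambda}$ in $\Gamma_{\Delta_{+,\lambda}}^{\,k}$, with the delta functions from $\delta\phi/\delta\phi$ collapsing pairs of variables — multiplied by $f_A\otimes f_B$, the remaining fields contributing powers $\phi^{\,d_A-k}$ and $\phi^{\,d_B-k}$. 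As $\supp f_A$ and $\supp f_B$ are compact and $\|\Delta_{+,\lambda}\|_\infty\to 0$, this kernel tends to $0$ uniformly; evaluated on any fixed configuration $\phi\in C^\infty(\bR^4;\bR)\cap\mathcal{S}'(\bR^4)$, which is bounded on the relevant compact sets, the corresponding number is $O(\|\Delta_{+,\lambda}\|_\infty^{\,k})\to 0$. Hence $A\star_\lambda B\to A\cdot B$, and the identical computation with $\Delta_{F,\lambda}$ in place of $\Delta_{+,\lambda}$ gives $A\cdot_{T_\lambda}B\to A\cdot B$. The only thing to be slightly careful about is the sense of the limit: it is convergence of the (continuous, compactly supported) defining kernels, equivalently pointwise convergence of $(A\star_\lambda B)(\phi)$ to $(A\cdot B)(\phi)$ at each field configuration $\phi$, with no uniformity in $\phi$ asserted.
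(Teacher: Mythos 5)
Your proof is correct and follows the same route as the paper: show that $\Delta_{+,\lambda}$ and $\Delta_{F,\lambda}$ vanish (uniformly) as $\lambda\to\infty$, so that $\Gamma_{\Delta_{+,\lambda}}$ and $\Gamma_{\Delta_{F,\lambda}}$ drop out of the exponentials and only the pointwise product $\mathcal{M}(A\otimes B)$ survives. The paper states the vanishing of the propagators as immediate from \eqref{eq:delta+lambda} and \eqref{eq:modFeyn}; you supply the supporting details (the $L^1$ bound on the mass shell giving $\|\Delta_{+,\lambda}\|_\infty\to 0$, the truncation of $e^{\Gamma_W}$ on polynomial functionals, and the precise sense of convergence), all of which are sound.
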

\begin{proof}
In the limit $\lambda\to \infty$ both $\Delta_{+,\lambda}$ and $\Delta_{F,\lambda}$ vanish as can be seen directly from  \eqref{eq:delta+lambda} and from \eqref{eq:modFeyn}. Hence, in the limit $\lambda \to \infty$ both $\Gamma_{\Delta_{+,\lambda}}$ and $\Gamma_{\Delta_{F,\lambda}}$ vanish, and thus
\[
\lim_{\lambda\to\infty}A \cdot_{T_\lambda} B = 
\lim_{\lambda\to\infty} \mathcal{M} \circ e^{ \Gamma_{i\Delta_{F,\lambda}}} (A\otimes B) = A\cdot B.
\]
The same holds for the $\star_{\lambda}$ product.
\end{proof}

The $\cdot_{T_\la}$ product is then used in~\cite{DMP19} to define the (infrared cutoff) $S$ matrix corresponding to the interaction~\eqref{eq:V} as
\begin{align*}
S(V_g) &= \sum_{k=0}^{+\infty} \frac{(-i)^k}{k!}V_g\cdot_{T_\la} V_g \cdot_{T_\la} \dots \cdot_{T_\la} V_g  \\
&=  \sum_{k=0}^{+\infty} \frac{(-i)^k}{k!} \int_{\bR^{4k}} dx_1\dots dx_k\,g(x_1)\dots g(x_k) \big(\phi(x_1)^n \cdot_{T_{\la}}\dots \cdot_{T_{\la}}\phi(x_k)^n\big),
\end{align*}
which is a formal power series of functionals in $\cA$. The resulting perturbation expansion can be obtained from the corresponding unrenormalized one on classical spacetime simply by substituting the ordinary Feynmann propagator with the modified one~\eqref{eq:modFeyn}. It is a remarkable fact that the above $S$ matrix is automatically ultraviolet finite, without the need of renormalization, thanks to the boundedness of the propagator $\Delta_{F,\la}$, and that it is unitary, i.e., it satisfies
\[
S(V_g) \star_\la S(V_g)^* = 1 =S(V_g)^*\star_\la S(V_g)
\]
to all orders in perturbation theory. But of course this $S$ matrix can have a physical meaning only after an approriate finite renormalization, cf.\ the discussion of this problem in~\cite[Sec.\ 6]{DMP19}.

The perturbative expansion of \emph{interacting observables} is then defined by means of the Bogoliubov map:
\[
R_{V_g}(A) := S(V_g)^{-1} \star_\la (S(V_g) \cdot_{T_\la} A), \qquad A \in \cA.
\]

Proposition~\ref{pr:factor-limit} implies in particular that the $S$ matrix in $\mathcal{A}$ is particularly simple in the $\la \to \infty$ limit. Actually the following proposition holds.

\begin{proposition}\label{pr:S-R-limit}
Let $V_g$ be as in~\eqref{eq:V} and $A \in \cA$. Then under the limits of large $\lambda$
\[
\lim_{\lambda\to\infty} S(V_g) = \exp{iV_g}, \qquad \lim_{\lambda\to\infty} R_{V_g}(A) =  A. 
\]
where $\exp{iV_g}$ is the exponential constructed with the pointwise product and where both limits are taken in the sense of perturbation theory.
\end{proposition}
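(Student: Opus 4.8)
The plan is to reduce both limits to Proposition~\ref{pr:factor-limit} by expanding the formal power series defining $S(V_g)$ and $R_{V_g}(A)$ order by order and observing that, as $\lambda\to\infty$, every occurrence of $\star_\lambda$ and of $\cdot_{T_\lambda}$ degenerates to the commutative, associative pointwise product $\cdot$. Throughout one works at a fixed order in perturbation theory, so only finitely many products enter at each step and no uniformity in the perturbative order is needed.

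For the $S$-matrix, the order-$k$ contribution to $S(V_g)$ is a fixed numerical multiple of the $k$-fold time-ordered product of $V_g$ with itself. I would rewrite this $k$-fold product, in the standard way, as a single Gaussian expression $\mathcal M_k\circ\exp\big(\sum_{1\le i<j\le k}\Gamma^{(i,j)}_{\Delta_{F,\lambda}}\big)(V_g^{\otimes k})$, where $\mathcal M_k$ is $k$-fold pointwise multiplication and $\Gamma^{(i,j)}_{\Delta_{F,\lambda}}$ contracts the $i$-th and $j$-th tensor factors through $\Delta_{F,\lambda}$; this is legitimate because $\cdot_{T_\lambda}=\mathcal M\circ e^{\Gamma_{\Delta_{F,\lambda}}}$ is associative and partial products remain in $\cA$. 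Expanding the exponential produces a finite sum — the functionals are polynomial in $\phi$ with compactly supported coefficients, so only finitely many contractions act nontrivially — whose term with no contractions is exactly $\mathcal M_k(V_g^{\otimes k})$, while every other term carries at least one factor $\Delta_{F,\lambda}$. Since $\Delta_{F,\lambda}\to0$ uniformly by~\eqref{eq:modFeyn} while the remaining, derivative-reduced factors stay bounded uniformly in $\lambda$, all these terms vanish, exactly as in the proof of Proposition~\ref{pr:factor-limit}. Hence the order-$k$ coefficient of $S(V_g)$ converges to the order-$k$ coefficient of the series obtained from the definition of $S(V_g)$ by replacing $\cdot_{T_\lambda}$ everywhere with the pointwise product, i.e.\ to $\exp(iV_g)$ formed with the pointwise product.

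For the interacting observable I would first note that $S(V_g)=1+O(\text{first order})$, so its $\star_\lambda$-inverse is a well-defined formal power series whose order-$k$ coefficient is a fixed universal polynomial, in the $\star_\lambda$-product, of the order-$\le k$ coefficients of $S(V_g)$ — namely the reorganization by order of the Neumann series $\sum_n\big(1-S(V_g)\big)^{\star_\lambda n}$. To carry the limit through this I need a mild strengthening of Proposition~\ref{pr:factor-limit}: if $A_\lambda\to A$ and $B_\lambda\to B$ with coefficient functions converging uniformly and supported in a $\lambda$-independent compact set, then $A_\lambda\star_\lambda B_\lambda\to A\cdot B$ and $A_\lambda\cdot_{T_\lambda}B_\lambda\to A\cdot B$; this follows by the same expansion of $e^{\Gamma}$, the leading term being $A_\lambda\cdot B_\lambda\to A\cdot B$ and all others carrying a vanishing, uniformly bounded propagator. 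Iterating it, the coefficients of $S(V_g)^{-1}$ converge to the corresponding pointwise-product polynomials in the coefficients of $\exp(iV_g)$, i.e.\ to those of the pointwise inverse $\exp(-iV_g)$, which exists since $\exp(iV_g)$ is nowhere zero. Applying the strengthened statement twice more to $R_{V_g}(A)=S(V_g)^{-1}\star_\lambda\big(S(V_g)\cdot_{T_\lambda}A\big)$ yields, in the limit, $\exp(-iV_g)\cdot\big(\exp(iV_g)\cdot A\big)=A$ by commutativity and associativity of the pointwise product.

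The genuinely analytic input — the vanishing of $\Delta_{+,\lambda}$ and $\Delta_{F,\lambda}$ — is already supplied by Proposition~\ref{pr:factor-limit} and~\eqref{eq:modFeyn}, so what remains is essentially algebraic bookkeeping. I expect the only point needing care to be precisely the promotion of the binary limit to $k$-fold products and to the rational expression defining $R_{V_g}$: Proposition~\ref{pr:factor-limit} is stated for fixed $A,B$, whereas the inductions confront $\lambda$-dependent intermediate factors, so one must record that the relevant polynomial coefficients stay uniformly controlled (fixed compact supports, together with $\sup$-norm bounds inherited from the uniform — indeed vanishing — bounds on $\|\Delta_{F,\lambda}\|_\infty$ and $\|\Delta_{+,\lambda}\|_\infty$). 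For the $S$-matrix part this can be avoided entirely via the single-exponential rewriting above; for the inverse entering $R_{V_g}$ it seems unavoidable but is routine.
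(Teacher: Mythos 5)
Your proposal is correct and follows essentially the same route as the paper: expand $S(V_g)$ and $R_{V_g}(A)$ order by order in perturbation theory, so that each coefficient is a finite combination of $\star_\lambda$- and $\cdot_{T_\lambda}$-products of copies of $V_g$ and $A$, and apply Proposition~\ref{pr:factor-limit} to conclude that every such product degenerates to the pointwise one. The paper's proof is terser and sidesteps the uniformity issue you flag for the inverse by expanding the entire fixed-order coefficient directly as a sum of mixed multi-fold products of the \emph{fixed} functionals $V_g$ and $A$ (exactly your single-exponential rewriting, applied to the whole expression rather than iterating the binary statement), but the extra care you take is sound and does no harm.
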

\begin{proof}
In order to check the validity of both limits in the sense of perturbation theory, it is enough to expand $S(\epsilon V_g)$ and $R_{\epsilon V_g}(A)$ as a power series in $\epsilon$ and to consider the contribution at a generic fixed order $n$.
At that order, $S(V_g)$ is a time ordered product of a finite number of copies of $V_g$. Proposition \ref{pr:factor-limit}
implies that under that limit the time ordered product converges to the pointwise product. Hence we have proved that the first limit holds. To prove the second limit we have just to expand $R_{V_g}(A)$ as a sum of mixed (time ordered or $\star_\lambda$) products  of $n$ copies of  $V_g$ and $A$. Again in the limit $\lambda\to\infty$ all these products converge to the pointwise product, hence we have the thesis.
\end{proof}

We now want to show that a similar result also holds in the adiabatic limit discussed in \cite{DMP19}. To this end, we first briefly summarize here its construction. We first introduce the free vacuum and KMS states $\omega$ and $\omega_{\beta, \la}$ on $\cA$:
\[
\omega(A) := A(0), \qquad \omega_{\beta,\la}(A) := \omega(e^{\frac 1 2 \int dxdy\,[\Delta_{\beta,\la}(x-y)-\Delta_{+\,\la}(x-y)]\frac{\delta^2}{\delta \phi(x) \delta \phi(y)}}A),\quad A \in \cA,
\]
where, in analogy to the 2-point function, 
\[
\Delta_{\beta,\la}(x) := \int_{\bR^8} dydz\, G_\la(x-y)\Delta_\beta(y-z) G_\la(z)
\]
and $\Delta_\beta$ is the usual thermal free 2-point function. It is clear from these formulas that $\omega_{\beta, \la}$ converges to $\omega$ as $\beta \to +\infty$. One then introduces the free time evolution $\alpha_t$, $t \in \bR$, on $\cA$, defined by
\[
\alpha_t(A)(\phi) := A(\phi_t),\quad \phi_t(x_0,\mathbf{x}) := \phi(x_0+t,\mathbf{x}).
\]
The algebra $\cA$ can also be equipped with the interacting time evolution $\alpha^V_t$, $t \in \bR$, which is intertwined with the free one by the Bogoliubov map $R_{V_g}$:
\[
\alpha^V_t(R_{V_g}(A)) = R_{V_g}(\alpha_t(A)), \qquad t \in \bR.
\]
We now observe that thanks to a remnant, in our nonlocal setting, of the casuality properties of the $S$ matrix, the interacting observables $R_{V_g}(A)$ do not depend on the behavior of the infrared cutoff function $g$ in the future of the localization region of $A$. This implies that, if $A$ is localized in the future of the hyperplane $t = -\epsilon$, we can choose the infrared cutoff in $R_{V_g}(A)$ as $g(t,\mathbf{x}) = \chi(t) h(\mathbf{x})$, where $h$ is a compactly supported function on $\bR^3$ and $\chi$ is  such that $\chi(t) = 0$ for $t \leq -2\epsilon$ and $\chi(t) = 1$ for $t \geq -\epsilon$. In this case, we write $V_{\chi,h} = V_g$. Then, following~\cite{FredenhagenLindner}, it is possible to show that, for $A$ localized as above, the free and interacting time evolution are related by
\[
\alpha^V_t(R_{V_{\chi,h}}(A)) = U(t) \star_\la \alpha_t(R_{V_{\chi,h}}(A)) \star_\la U(t)^*, \qquad t \in \bR,
\]
where $U(t) = S(V_{\chi,h})^{-1}\star_\la S(\alpha_t V_{\chi,h})$, $t \in \bR$, are unitaries in $\cA$ which form a cocyle with respect to the free evolution: $U(t+s) = U(t) \star_\la \alpha_tU(s)$. Using such cocycle, one can then construct the interacting KMS states $\omega^h_{\beta, \la}$, defined formally as
\[
\omega^h_{\beta,\la}(A) := \frac{\omega_{\beta,\la}(U(i\beta/2)^*\star_\la A\star_\la U(i\beta/2))}{\omega_{\beta,\la}(U(i\beta/2)^*\star_\la U(i\beta/2))},
\]
i.e., more precisely, as the analytic continuation for $t = -i\beta/2$ of the function $t \mapsto \omega_{\beta,\la}(U(-t)^*\star_\la A\star_\la U(-t))$. Finally, it is shown in~\cite{DMP19} that there exists, order by order in perturbation theory, the limit
\[
\tilde \omega_\la(A) =\lim_{h \to 1} \lim_{t \to +\infty} \omega(\alpha_t^VR_{V_{\chi,h}}(A)) = \lim_{\substack{\beta \to \infty\\h \to 1}} \omega^h_{\beta,\la}(R_{V_{\chi,h}}(A)),
\]
where the order of the limits in the right hand side is immaterial, and that the resulting state $\tilde \omega_\la$ is invariant under spacetime translation, and can therefore be interpreted as the vacuum of the interacting theory.

The next proposition then shows that in the limit of large $\la$,  the above defined interacting vacuum reduces to the free one.
\begin{proposition}
For $A \in \cA(O)$ with $O \subset (-\epsilon,\epsilon)\times \bR^3$ bounded, there holds
\[
\lim_{\la \to \infty} \tilde\omega_\la(A) = \omega(A)
\]
to all orders in perturbation theory.
\end{proposition}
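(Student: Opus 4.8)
The plan is as follows. Since $\omega(A)=A(0)$ and the order-zero contribution to $\tilde\omega_\la(A)$ already equals $A(0)$ and is independent of $\la$, the statement concerns only the perturbative orders $k\ge 1$, and it suffices to prove that each of these vanishes as $\la\to\infty$. For the underlying mechanism I would use the intertwining relation $\alpha^V_t(R_{V_g}(A))=R_{V_g}(\alpha_t(A))$ together with $\omega\circ\alpha_t=\omega$ (indeed $(\alpha_tB)(0)=B(0)$), which gives $\omega(\alpha^V_t R_{V_{\chi,h}}(A))=\omega(R_{V_{\chi,h}}(\alpha_t A))$; since $A\in\cA(O)$ with $O\subset(-\epsilon,\epsilon)\times\bR^3$, for $t\ge 0$ the functional $\alpha_tA\in\cA$ is localized in the future of the slice $t=-\epsilon$, so the same cutoff $\chi$ remains admissible, and Proposition~\ref{pr:S-R-limit} (whose argument applies verbatim to $V_{\chi,h}$) yields $R_{V_{\chi,h}}(\alpha_tA)\to\alpha_tA$ as $\la\to\infty$ in the sense of perturbation theory, hence $\omega(R_{V_{\chi,h}}(\alpha_tA))\to\omega(\alpha_tA)=A(0)=\omega(A)$ for every fixed $t$ and $h$. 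The same collapse can be read off the thermal representation of $\tilde\omega_\la$: by Propositions~\ref{pr:factor-limit} and~\ref{pr:S-R-limit} and the vanishing of $\Delta_{+,\la}$, $\Delta_{F,\la}$, $\Delta_{\beta,\la}$ one has that $\star_\la$ and $\cdot_{T_\la}$ collapse to the pointwise product, $R_{V_{\chi,h}}(A)\to A$, $\omega_{\beta,\la}\to\omega$ and $U(t)\to e^{i(\alpha_t V_{\chi,h}-V_{\chi,h})}$ formed with the pointwise product, and since this product is commutative and $\omega$ is multiplicative for it ($\omega(BC)=B(0)C(0)=\omega(B)\omega(C)$), the cocycle factors cancel between numerator and denominator of $\omega^h_{\beta,\la}$, so that again $\omega^h_{\beta,\la}(R_{V_{\chi,h}}(A))\to\omega(A)$.

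In either presentation the identity would follow immediately \emph{if} this $\la\to\infty$ convergence were uniform in the cutoff parameters still to be removed in the definition of $\tilde\omega_\la$ — the time $t$ (equivalently the inverse temperature $\beta$) and the spatial cutoff $h$ — since one could then interchange $\lim_{\la\to\infty}$ with $\lim_{h\to 1}\lim_{t\to+\infty}$. Securing this uniformity is the real content. I would fix the order $k\ge 1$, expand the Bogoliubov map, and note that the order-$k$ part of $\omega(R_{V_{\chi,h}}(\alpha_tA))$ is a finite sum of integrals over the $k$ vertex positions $y_1,\dots,y_k$ of $\prod_i\chi(y_i^0)h(\mathbf{y}_i)$ times sums of products of the propagators $\Delta_{+,\la}$ and $\Delta_{F,\la}$ joining the $y_i$ and the arguments of $\alpha_tA$, with every field contracted (only fully contracted terms survive evaluation at $\phi=0$); for $k\ge 1$ each such term carries at least one propagator factor, the remaining factors being bounded functions whose sup norm is moreover $\la$-uniformly small for $\la$ large.

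I would then invoke the estimates underlying the existence of the adiabatic limit in~\cite{DMP19} (adapted from~\cite{FredenhagenLindner}) to conclude that these vertex integrals converge absolutely as $h\to 1$ and $t\to+\infty$ with a majorant that can be chosen independent of $\la$ for $\la\ge\la_0$: the modified propagators carry the Gaussian factor $e^{-\la^2(2|\mathbf{p}|^2+m^2)}$ in momentum space — equivalently the Gaussian kernels $G_\la$ in position space — whose decay only improves with $\la$, so bounds obtained at $\la=\la_0$ dominate all larger values. Dominated convergence in the $y_i$-variables, together with the pointwise vanishing of each propagator as $\la\to\infty$ (Proposition~\ref{pr:factor-limit}), then forces each term, hence the whole order-$k$ contribution, to tend to $0$ as $\la\to\infty$ uniformly in $t\ge 0$ and in $h$; pushing this uniform limit through $\lim_{h\to 1}\lim_{t\to+\infty}$ yields the vanishing of the order-$k$ part of $\tilde\omega_\la(A)$, which is the claim.

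The main obstacle is precisely this $\la$-uniform control of the adiabatic limit: one must verify that the decay and cancellation estimates that guarantee the existence of the $h\to 1$ and $t\to+\infty$ limits in~\cite{DMP19} are not degraded — indeed are only reinforced — for large $\la$, i.e. that the relevant majorants are monotone, or at least uniformly bounded, in $\la$ for $\la$ large. A minor technical point is that $\Delta_{+,\la}$ is not integrable, so the dominated-convergence step must be run on the absolutely convergent integral representation provided by the adiabatic-limit construction rather than on a naive rewriting. Should a uniform majorant not be directly extractable from~\cite{DMP19}, the fallback is to rerun the~\cite{FredenhagenLindner}-type argument with the modified propagators~\eqref{eq:delta+lambda} and~\eqref{eq:modFeyn} in place, tracking the $\la$-dependence throughout and exhibiting in each connected contribution an explicit prefactor vanishing as $\la\to\infty$.
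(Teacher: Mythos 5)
Your proposal is correct and follows essentially the same route as the paper's proof: first the fixed-cutoff convergence $\omega^h_{\beta,\la}(R_{V_{\chi,h}}(A))\to\omega(A)$ as $\la\to\infty$, obtained from the collapse of $\star_\la$ and $\cdot_{T_\la}$ to the pointwise product (Propositions~\ref{pr:factor-limit} and~\ref{pr:S-R-limit}), and then the interchange of the $\la\to\infty$ limit with the $\beta\to\infty$, $h\to 1$ limits, justified by the fact that the adiabatic-limit estimates of~\cite[Prop.\ B.1]{DMP19} hold uniformly in $\la$ for $\la$ large. You correctly single out this $\la$-uniformity of the majorants as the real content of the argument, which is precisely what the paper's proof invokes.
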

\begin{proof}
It follows from Prop.~\ref{pr:S-R-limit} and the above formulas that
\[
\lim_{\la \to +\infty} U(t) = e^{-iV_{\chi,h}} e^{i V_{\chi,h}} = 1, \qquad \lim_{\la \to +\infty} \omega_{\beta,\la}(A) = \omega(A).
\]
This clearly implies, using again Prop.~\ref{pr:S-R-limit},
\[
\lim_{\la \to +\infty} \omega_{\beta,\la}(U(-t)^* \star_\la R_{V_{\chi,h}}(A)\star_\la U(-t)) = \omega(A),
\]
and therefore $\lim_{\la \to +\infty} \omega_{\beta,\la}^h(R_{V_{\chi,h}}(A)) = \omega(A)$ for fixed $h$ and $\beta$. 
It is then easy to verify that the estimates of~\cite[Prop.\ B.1]{DMP19} hold uniformly for $\la > 1$, and one sees, by inspection of the proof of~\cite[Thm.\ 5.5]{DMP19}, that this implies that the $\la \to +\infty$ limit can be interchanged with the $\beta \to +\infty$, $h \to 1$ one, thus concluding the proof.
\end{proof}

The discussion presented in this section indicates that at early time, in a cosmological quantum spacetime every interacting quantum field theory should converge to a free theory.

\section{Outlook}

The physical model of Quantum Spacetime was first proposed \cite{DFR95} as a compelling joint consequence of Quantum Mechanics and General Relativity, in contrast with toy mathematical models, and with the model which was early proposed by Snyder \cite{Snyder} as a possible way out of divergences (but there was no reference to Gravitation there, and the model was soon forgotten with the advent of Renormalisation). Motivations partly similar to those in  \cite{DFR95} were present in \cite{Maggiore} and \cite{Madore}.

But the Quantum Conditions, apt to implement the Uncertainty Relations that Physics dictates, ought to be derived dynamically in a consistent quantum theory of gravitation, which is still missing. Only in such a theory could one deduce rigorously features occurring in the cosmology of the early Universe. The present state of the art allows one only to insert in the formalism nowadays available some features that Quantum Spacetime strongly indicates as characteristics to be expected: the use of the Quantum Wick Product to describe interactions, and the replacement of the Planck length by the effective Planck Length, which would diverge approaching the Big Bang. 

In order to fully confirm the scenario presented in this paper, we still miss a proof that the perturbation expansion of the $S$ matrix itself will converge term by term to unity as $\lambda$ tends to infinity.

While exposed to be challenged by a future complete theory, this approach leads to a consistent picture, which makes plausible a solution of the Horizon Problem without introducing an inflationary  potential \cite{DMP13}, and provides the evidence discussed here that degrees of freedom and interactions might fade away at the beginning of times. To be compared with a future theory where spacetime commutation relations ought to be part of the equation of motion, fusing Algebra with Dynamics \cite{Dop01, BDMP15}.


\begin{thebibliography}{99}

\bibitem{Bahns}
D. Bahns, {\it ``Perturbative methods on the noncommutative Minkowski space''} PhD Thesis Hamburg University,
Desy-Thesis 04-004 (2003).


\bibitem{BDMP15}  D. Bahns, S. Doplicher, G. Morsella, G. Piacitelli,  {\it ``Quantum spacetime and algebraic quantum field theory''}, in ``Advances in Alegbraic Quantum Field Theory'', 289-329, Math. Phys. Stud., Springer, Cham (2015).
                     
\bibitem{BDFP03} D. Bahns, S. Doplicher, K. Fredenhagen and G. Piacitelli, 
{\it ``Ultraviolet finite quantum field theory on quantum spacetime''},
Commun. Math. Phys. {\bf 237}, 221 (2003). 

\bibitem{BDFP05} D. Bahns, S. Doplicher, K. Fredenhagen and G. Piacitelli, 
{\it ``Field theory on noncommutative spacetimes: quasiplanar Wick products''},
Phys. Rev. D {\bf 71}, 025022 (2005). 

\bibitem{BDFP10} D. Bahns, S. Doplicher, K. Fredenhagen and G. Piacitelli, 
{\it ``Quantum geometry on quantum spacetime: distance, area and volume operators''},
Commun. Math. Phys. {\bf 308}, 567 (2011). 

\bibitem{BDF09}R.\ Brunetti, M.\ D\"utsch, K.\ Fredenhagen, {\it ``Perturbative algebraic quantum field theory and the renormalization groups}, Adv.\ Theor.\ Math.\ Phys.\ {\bf 13}, 1541 (2009).

\bibitem{DFR95}
S.~Doplicher, K.~Fredenhagen  and J.~E.~Roberts 
{\it ``The quantum structure of spacetime at the Planck scale and quantum fields''},
Commun.\ Math.\ Phys.\ {\bf 172}, 187 (1995).

\bibitem{DMP13}S.~Doplicher, G.~Morsella, N.~Pinamonti,
{\it ``On quantum spacetime and the horizon problem''},
J.\ Geom.\ Phys.\ {\bf 74}, 196 (2013).

\bibitem{DMP19}S.~Doplicher, G.~Morsella, N.~Pinamonti,
{\it ``Perturbative algebraic quantum field theory on quantum spacetime: adiabatic and ultraviolet convergence in some non covariant pictures of interaction''}, Commun. Math. Phys., in press (2020), arXiv:1906.05855.

\bibitem{Dop01}
S.~Doplicher, 
{\it ``Spacetime and fields: a quantum texture''},
in ``New developments in fundamental interaction theories''. Proceedings of the 37th Karpacz
Winter school of theoretical physics, J.~Lukierski and J.~Rembielinski eds., p.\ 204,
Amer. Inst. Phys. (2001), arXiv:hep-th/0105251.

\bibitem{FredenhagenLindner} 
  K.~Fredenhagen and F.~Lindner,
  {\it ``Construction of KMS States in Perturbative QFT and Renormalized Hamiltonian Dynamics,''}
  Commun.\ Math.\ Phys.\  {\bf 332}, 895 (2014)
  Erratum: [Commun.\ Math.\ Phys.\  {\bf 347}, no. 2, 655 (2016)]

\bibitem{Maggiore}
M.~Maggiore {\it ``A Generalized Uncertainty Principle in Quantum Gravity''}, Phys.Lett. B304 65-69 (1993) 

\bibitem{Madore}
J.~Madore {\it ``Fuzzy Physics''}, Annals of Physics, Volume 219, Issue 1, October 1992, Pages 187-198


\bibitem{MT}G.~Morsella, L.~Tomassini, in preparation.

\bibitem{piacitelli} 
  G.~Piacitelli,
{\it ``Nonlocal theories: New rules for old diagrams,''}
  JHEP {\bf 0408}, 031 (2004).

\bibitem{Snyder}
H.~S.~Snyder. {\it ``Quantized space-time,''} Phys. Rev. D {\bf 67}  38-41 (1947)

\bibitem{TV14}
L.~Tomassini, S.~Viaggiu,
{\it ``Building non commutative spacetimes at the Planck length for Friedmann flat cosmologies''},
Class.\ Quantum\ Grav.\ {\bf 31}, 185001 (2014).

\end{thebibliography}
\end{document}